\newcommand{\F}{ \ensuremath{ \mathbb{F}}}
\newcommand{\C}{ \ensuremath{ \mathbb C}}
\newcommand{\Z}{ \ensuremath{ \mathbb Z}}
\begin{document}

\markboth{Laurent Poinsot, Alexander Pott}{Non-Boolean Almost Perfect Nonlinear Functions 
on Non-Abelian Groups}

\title{Non-Boolean Almost Perfect Nonlinear Functions 
on Non-Abelian Groups}
\author{Laurent Poinsot}
\address{LIPN - UMR CNRS 7030, 
Institut Galil\'ee, University Paris XIII,\\ 
99, avenue Jean-Baptiste Clément,\\  
93430 Villetaneuse\\
\email{laurent.poinsot@lipn.univ-paris13.fr}}

\author{Alexander Pott}
\address{Department of Mathematics, Otto-von-Guericke-University
Magdeburg,\\ 39106 Magdeburg, Germany\\
\email{alexander.pott@ovgu.de}}

\maketitle

\begin{history}
\received{(Day Month Year)}
%\revised{(Day Month Year)}
\accepted{(Day Month Year)}
\comby{(xxxxxxxxxx)}
\end{history}

\begin{abstract}
\noindent The purpose of this paper is to present the extended definitions 
and characterizations of the 
classical notions of APN and maximum nonlinear Boolean functions to deal with 
the  case of 
mappings from a finite group $K$ to another 
one $N$ with the possibility that one or both groups are non-Abelian.

\end{abstract}

\keywords{Almost perfect nonlinear; bent function; maximal nonlinear.}

\ccode{2000 Mathematics Subject Classification: 94A60, 20C05, 05B10}

\section{Highly nonlinear functions in the Abelian group setting}

The study of nonlinear properties of Boolean functions is one of 
the major tasks in secret-key cryptography. But the  
adjective "nonlinear" has several meanings: it can be related to the
resistance 
against the famous differential attack 
\cite{BS} and, in this interpretation, actually refers to (almost) perfect 
nonlinear functions. Moreover nonlinearity is also 
related to the maximum magnitude of the Fourier spectrum of Boolean functions 
- under the names 
"bent", "almost bent" or "maximal nonlinear" functions - which is itself
linked 
to the resistance 
against the linear attack \cite{Mat}. These two ways to define nonlinearity
 are not independent and even, in 
many situations, are exactly the same. 

Most of the studies and results on nonlinearity concerns Boolean 
functions, or in other words, functions 
from $K$ to $N$ where $K$ and $N$ are both elementary Abelian $2$-groups. Even 
if this kind of groups 
seems to be very natural for cryptographic purpose, there is no rule that
prevent 
us to use more 
complex groups and even non-Abelian ones. In this paper, we will
discuss the standard notion of nonlinearity in the 
non-Abelian setting. Nevertheless we begin by some definitions and basic tools 
used to study nonlinearity 
in the Abelian case. In order to keep the paper fairly selfcontained 
some proofs of well-known results will be added.

Let $K$ and $N$ be two finite groups written multiplicatively of 
orders $m$ and $n$ respectively. Let 
$G$ be the direct product $K \times N$. 
A mapping $f : K \rightarrow N$ is called 
{\bf{perfect nonlinear}}  (see~\cite{nyberg}) 
if and only if for each $a \in K$, $a \not = 1_K$ and each $b \in N$, the
 quantity 
\begin{equation}
\delta_f (a,b) = |\{g \in K | f(ag)(f(g))^{-1}=b\}|
\end{equation}
is constant equals to $\frac{m}{n}$. In \cite{CD04} the reader may find a very 
complete survey on the subject. 
In the cases where $n$ does not divide $m$, it is impossible for perfect 
nonlinear 
functions to exist. 
We note that perfect nonlinear functions also cannot exist if $K$ and
$N$ are elementary Abelian $2$-groups of the same order. Actually, a more
  general result holds:
\begin{theorem}
Let $K$ be an arbitrary group of order $m=2^a$, and let $N$ be an  
Abelian group of order $n=2^b$. A perfect nonlinear function $f:K\to N$
does not exist, if
\begin{arabiclist}
\item $a$ is odd;
\item $a=2s$ is even and $b\geq s+1$.
\end{arabiclist}
\end{theorem}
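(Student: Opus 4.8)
The plan is to translate perfect nonlinearity into a statement about the complex-valued functions $\chi\circ f$, where $\chi$ ranges over the characters of the Abelian group $N$; these are available precisely because $N$ is Abelian. Writing $a_y=|f^{-1}(y)|$ and $\widehat a(\chi)=\sum_{y\in N}a_y\chi(y)=\sum_{g\in K}\chi(f(g))$, the two facts I would extract from $\delta_f(a,b)=m/n$ are: first, that for any nontrivial $\chi$ the autocorrelation $\sum_{g}\chi(f(ag))\overline{\chi(f(g))}=\sum_b\delta_f(a,b)\chi(b)$ vanishes for every $a\neq 1_K$; and second, by summing this over all $a\in K$, that
\[
\bigl|\widehat a(\chi)\bigr|^2=\sum_{a\in K}\sum_{b\in N}\delta_f(a,b)\chi(b)=m\qquad(\chi\neq\chi_0).
\]
Both identities use only that $N$ is Abelian, so $K$ may be arbitrary.

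For part (1) I would pick a character $\chi$ of order $2$, which exists because $N$ is a nontrivial Abelian $2$-group. Then $F:=\chi\circ f$ takes values in $\{\pm1\}$, and since $\chi(x)^{-1}=\chi(x)$ the vanishing autocorrelation reads $\sum_g F(ag)F(g)=0$ for $a\neq 1_K$, while it equals $m$ for $a=1_K$. Re-indexing the square of the character sum then gives
\[
\Bigl(\sum_{g\in K}F(g)\Bigr)^2=\sum_{a\in K}\sum_{g\in K}F(ag)F(g)=m=2^a .
\]
As the left-hand side is the square of an integer, $2^a$ must be a perfect square, forcing $a$ even; hence no perfect nonlinear $f$ exists when $a$ is odd.

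For part (2) I would start from $|\widehat a(\chi)|^2=m=2^{2s}$ for all nontrivial $\chi$ and feed it into Parseval's identity $\sum_{\chi}|\widehat a(\chi)|^2=n\sum_y a_y^2$, which yields $\sum_y a_y^2=2^{4s-b}+2^{2s}-2^{2s-b}$. Integrality of this quantity already recovers $b\le 2s$ (i.e. $n\mid m$), but the content of the statement is the sharper bound $b\le s$, equivalently $n^2\le m$. To reach it I would exploit that the flat spectrum $|\widehat a(\chi)|=2^s$ holds simultaneously for \emph{all} nontrivial $\chi$: the nonnegative integers $a_y$ then satisfy the group-ring equation $AA^{(-1)}=2^{2s}\,1_N+\lambda\bigl(\sum_{y\in N}y\bigr)$ in $\mathbb{Z}[N]$ with $\lambda=2^{2s-b}(2^{2s}-1)$, so the $a_y$ form a ``difference multiset'' of order $2^{2s}$ in $N$.

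The hard part will be converting this flat-spectrum condition into $b\le s$, since (as one checks on small examples) the moment identities alone remain consistent throughout the range $s<b\le 2s$. Here I would invoke the algebraic structure of the $2$-group $N$: for a character $\chi$ of order $2^e$, the value $\widehat a(\chi)$ lies in $\mathbb{Z}[\zeta_{2^e}]$ with $\widehat a(\chi)\overline{\widehat a(\chi)}=2^{2s}$; since $2$ is totally ramified and the ramified prime is fixed by complex conjugation, $\widehat a(\chi)/2^s$ is an algebraic integer, and because every Galois conjugate of $\widehat a(\chi)$ is again some $\widehat a(\chi^j)$ of absolute value $2^s$, Kronecker's theorem forces $\widehat a(\chi)=2^s\omega_\chi$ with $\omega_\chi$ a root of unity. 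Thus all nontrivial character values of the integral group matrix of $A$ have $2$-adic valuation exactly $s$, and a $2$-adic descent — a Smith-normal-form / $\mathbb{F}_2[N]$-rank analysis of that matrix — should pin the order $2^b$ down to at most $2^s$. This final number-theoretic step, which is the generalized Nyberg bound $|N|\le\sqrt{|K|}$ for perfect nonlinear functions, is the genuine obstacle; everything preceding it is essentially bookkeeping.
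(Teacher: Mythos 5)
First, a point of reference: the paper itself gives no proof of this theorem; it is stated and the reader is referred to \cite{davis92a,ma-schmidt95,nyberg}. So your attempt can only be judged on its own completeness. Your part (1) is correct and self-contained: the identity $\bigl|\sum_{g\in K}\chi(f(g))\bigr|^2=\sum_{a\in K}\sum_{b\in N}\delta_f(a,b)\chi(b)=m$ for nontrivial $\chi$ uses only that $N$ is Abelian (so $K$ may be arbitrary), and choosing $\chi$ of order $2$ makes the character sum a rational integer whose square is $2^a$, forcing $a$ even. This is essentially Nyberg's argument and it works.

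Part (2), however, is not proved. The reduction to the fibre-size element $A=\sum_{y\in N}|f^{-1}(y)|\,y\in\mathbb{Z}[N]$ with $|\chi(A)|^2=2^{2s}$ for all nontrivial $\chi$ is correct, the observation that the moment identities cannot decide the range $s<b\le 2s$ is essentially right (though $b=2s$ is in fact already excluded by the parity of $\sum_y a_y^{\,2}=2^{4s-b}+2^{2s}-2^{2s-b}$, which must be congruent to $\sum_y a_y$ mod $2$), and the ramification-plus-Kronecker step giving $\chi(A)=2^s\omega_\chi$ with $\omega_\chi$ a root of unity is sound. But your concluding sentence --- that a ``$2$-adic descent / Smith-normal-form analysis should pin the order $2^b$ down to at most $2^s$'' --- is an announcement of the theorem, not a proof of it. That exponent-bound step is precisely the content of the cited work of Davis and of Ma and Schmidt; it does not follow by bookkeeping from $\chi(A)=2^s\omega_\chi$, but requires a genuine induction on the subgroup lattice of $N$, tracking how the congruences among the $a_y$ forced by characters of maximal order interact with integrality and nonnegativity under quotients $N\to N/U$, $|U|=2$. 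A secondary issue: your reduction discards all information about $f$ except the fibre sizes $|f^{-1}(y)|$, and you offer no argument that nothing essential is lost; the resulting multiset statement does happen to be true and sufficient, but that is itself part of what must be established. As it stands, part (2) is a reasonable programme with the decisive step missing.
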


For proof, we refer to \cite{davis92a,ma-schmidt95,nyberg}.
Since perfect nonlinear functions do
not exist in many cases, the following definition is meaningful:
we call $f:K\to N$ an {\bf{almost perfect nonlinear}} (APN) 
function 
(see~\cite{nyberg93}) if and only if 
\begin{equation}\label{eq:apn}
\displaystyle \sum_{(a,b)\in G}\delta_f (a,b)^2 \leq \sum_{(a,b)\in G}\delta_g (a,b)^2\ \forall g:K \rightarrow N\ .
\end{equation}
Both these definitions do not use the commutativity in a group.
Hence these definitions also apply to the non-Abelian situation.

With each function $f : K \rightarrow N$ we associate its graph $
D_f \subseteq G$:
\begin{equation}
D_f = \{(g,f(g)) | g\in K\}\ .
\end{equation}
This set plays an important role in the study of nonlinear properties of the 
corresponding function. For instance $f$ is 
perfect nonlinear if and only if its graph is a splitting 
$(m,n,m,\frac{m}{n})$ difference set in $G$ relative to the 
normal subgroup $N$. Recall that a set $R \subseteq G=K\times N$ of cardinality $k$ 
is a (splitting) $(m,n,k,\lambda)$ difference set in 
$G$ relative to $N$ if and only if the following property holds:
the list of nonidentity quotients $r(r')^{-1}$ with $r,r' \in R$ covers every 
element in $G\setminus \{1_K\}\times N$ 
precisely 
$\lambda$ times and no element in $\{1_K\}\times (N\setminus\{1_N\})$ is
covered.
The term {\it splitting} refers to the fact that the group in which the
relative difference set exists is $K\times N$, hence it ``splits'',
where one of the factors is the ``forbidden subgroup''. We note
that also non-splitting relative difference sets are studied, however 
for applications in cryptography only {\it mappings} $f:K\to N$
seem to be of interest, and then the graph corresponds 
to a splitting relative difference sets. Moreover, we have
$k=m$ in this situation, and this case is called {\it semi-regular}. 
There are also many relative difference sets known where $k\ne m$. 
For a  survey on relative difference sets,
we refer to \cite{pott-survey}.

In general such combinatorial structures are studied using the notion of 
group algebras. 
Let $G$ be a finite group (written multiplicatively) and $R$ a 
commutative ring
with a unit. 
We denote by $R[G]$ the 
{\bf{group algebra}} of $G$; its underlying $R$-module is free and has a basis indexed 
by the elements of $G$ and 
which is identified to $G$ itself: so it is a free $R$-module of rank $|G|$
 and, as such, isomorphic to the direct sum 
$\displaystyle \bigoplus_{g\in G}R_g$ where for each $g \in G$, $R_g =R$.
 
Every element $D$ of $R[G]$ can be uniquely represented as 
\begin{equation}
D=\displaystyle \sum_{g \in G}d_g g,\ \mbox{with}\ d_g \in R\ .
\end{equation}
The addition in $R[G]$ is given as a component-wise addition of $R$. More precisely 
\begin{equation}
\displaystyle \left(\sum_{g \in G}a_g g\right)+\left(\sum_{g \in G}b_g
g\right) = 
\sum_{g \in G}
\left( a_g + b_g\right)g 
\end{equation}
while the multiplication - convolutional product - is  
\begin{equation}
\displaystyle \left(\sum_{g \in G}a_g g\right)\left(\sum_{g \in G}b_g g\right) 
= \sum_{g \in G}
\left(\sum_{h \in G}a_h b_{h^{-1}g}\right)g\ . 
\end{equation}
Finally the scalar multiplication by elements of $R$ is the usual one 
\begin{equation}
\displaystyle\lambda \left (\sum_{g \in G}d_g g\right) = \sum_{g \in G}(\lambda d_g)g,\ \mbox{with}\ 
\lambda \in R\ .
\end{equation}
Any subset $D$ of $G$ is naturally identified with the following element of
 $R[G]$
\begin{equation}
\displaystyle \sum_{g \in G}1_D(g)g = \sum_{g \in D}g\ ,
\end{equation}
where we define the {\bf{indicator function}} of $D$ 
\begin{equation}
1_D(g) = \left \{
\begin{array}{ll}
1_R&\mbox{if}\ g\in D\ ,\\
0_R&\mbox{if}\ g\not \in D\ . 
\end{array}
\right .
\end{equation}
When $R = \mathbb{C}$ we define 
\begin{equation}
\displaystyle \left (\sum_{g \in G}d_g g\right )^{(-1)} = \sum_{g \in G}\overline{d_g} g^{-1} = 
\sum_{g \in G}\overline{d_{g^{-1}}}g
\end{equation}
where $\overline{z}$ is the complex conjugate of $z \in \mathbb{C}$.\\
\noindent For instance for a function $f : K \rightarrow N$ and $G=K\times N$, 
we have 
\begin{equation}
D_f D_f^{(-1)}=\displaystyle \sum_{(a,b)\in G}\delta_f (a,b)(a,b) \in \mathbb{Z}[G]\ .
\end{equation}

As we claimed  above, $(m,n,k,\lambda)$ difference sets in $G
=K\times N$ relative to $N$ 
have a natural interpretation in $\mathbb{Z}[G]$ because $R \subseteq G$ is
 such a set if and only if it satisfies the 
following group ring equation:
\begin{equation}\label{eq:rds}
RR^{(-1)}=k1_G + \lambda (G - N)\ .
\end{equation}
Therefore, we have the following well known theorem which holds in the
Abelian as well as in the non-Abelian case:

\begin{theorem}
A function $f:K\to N$ is perfect nonlinear if and only if
the graph $D_f$ of $f$ is a splitting 
$(m,n,m,\frac{m}{n})$ difference set relative to $\{1_K\}\times N$.
\end{theorem}

Another important tool for the study of highly nonlinear mappings
 - but restricted to the case of finite 
{\bf{Abelian}} groups - is the notion of group characters. A {\bf{character}} 
$\chi$ of a finite Abelian group is a group homomorphism from 
$G$ to the multiplicative group $\C^\ast$ of
$\C$. The elements $\chi(g)$ belong to the unit circle of 
$\C$.

\begin{comment}
torus (that is to say the unit circle of the complex field).
\end{comment} 
The set of all such characters of a given Abelian group $G$, when equipped with the 
point-wise multiplication of mappings, 
is itself a group (called the {\bf{dual group}}, denoted by $\widehat{G}$), 
isomorphic to $G$. The character $\chi_0 : g \in G \mapsto 1$ 
is called the {\bf{principal}} character of $G$. The characters of a direct
 product 
$K \times N$ are given by $\chi = \chi_K \otimes \chi_N$
where $(a,b)\in K \times N$ is mapped to 
  $\chi_K (a)\chi_N (b)\in \mathbb{C}$, and where $\chi_K$ is a character of $K$ and $\chi_N$ a character of $N$. 
The characters of $G$ 
can be naturally extended by linearity to homomorphisms of algebras from $\mathbb{C}[G]
$ to $\mathbb{C}$: if
$D=\displaystyle \sum_{g\in G}d_g g \in \mathbb{C}[G]$ and $\chi$ is a
character of 
$G$, then 
\begin{equation}
\chi(D)=\displaystyle \sum_{g \in G}d_g \chi(g)\ .
\end{equation}  

There is an important formula for characters of Abelian groups
which also holds for non-Abelian groups, see Theorem 
\ref{theo:inv-par-na}.

\begin{theorem}[Inversion formula]
Let $G$ be an Abelian group, and let $D=\sum_{g\in G} d_g g$
be an element in $\C[G]$. Then
$$
d_g=\frac{1}{|G|} \sum_{\chi\in\hat{G}}\chi(D)\cdot \chi(g^{-1}).
$$
\end{theorem}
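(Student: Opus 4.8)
The plan is to reduce the identity to the orthogonality relations among the characters of $G$. The essential ingredient is the \emph{dual} orthogonality relation: for every $h \in G$,
$$
\sum_{\chi \in \widehat{G}} \chi(h) = \begin{cases} |G| & \text{if } h = 1_G, \\ 0 & \text{otherwise.} \end{cases}
$$
Granting this, the computation is short. I would expand the right-hand side using the definition $\chi(D) = \sum_{h \in G} d_h \chi(h)$, exchange the (finite) order of summation, and use that each $\chi$ is a homomorphism so that $\chi(h)\chi(g^{-1}) = \chi(hg^{-1})$. This yields $\tfrac{1}{|G|}\sum_{h \in G} d_h \sum_{\chi \in \widehat{G}} \chi(hg^{-1})$, and by the relation above the inner sum equals $|G|$ exactly when $h = g$ and $0$ otherwise, collapsing the whole expression to $d_g$.

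The real work is therefore in establishing the orthogonality relation, and the main obstacle is that it is a statement about summing over the dual group $\widehat{G}$ rather than over $G$ itself. First I would prove the more familiar companion relation $\sum_{g \in G} \chi(g) = 0$ for any nonprincipal character $\chi$: choosing $g_0$ with $\chi(g_0) \neq 1$, the substitution $g \mapsto g_0 g$ permutes $G$ and hence leaves the sum invariant, so $(\chi(g_0) - 1)\sum_{g \in G} \chi(g) = 0$ forces the sum to vanish, while for $\chi = \chi_0$ the sum is plainly $|G|$. To transfer this to the dual side, I would observe that for a fixed $h \in G$ the evaluation map $\chi \mapsto \chi(h)$ is itself a character of $\widehat{G}$, and that it coincides with the principal character of $\widehat{G}$ precisely when $h = 1_G$. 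Applying the companion relation to the group $\widehat{G}$ then yields the desired vanishing for $h \neq 1_G$, and the value $|\widehat{G}| = |G|$ settles the case $h = 1_G$.

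The genuinely non-routine point I expect is the faithfulness of the canonical pairing $G \times \widehat{G} \to \C^\ast$, that is, the fact that a nonidentity $h$ is separated from $1_G$ by some character and so induces a nontrivial character on $\widehat{G}$; everything else is formal manipulation of finite sums. This separation property is exactly what makes evaluation at a nonidentity $h$ a nonprincipal character of $\widehat{G}$, and it follows either from the canonical isomorphism $G \cong \widehat{\widehat{G}}$, or, concretely, from the structure theorem for finite Abelian groups, which reduces the claim to cyclic factors where the characters are explicit roots-of-unity maps.
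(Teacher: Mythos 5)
Your proof is correct and follows essentially the same route as the paper: the paper states this Abelian version without a separate proof and instead proves the non-Abelian generalization (its Theorem on Fourier inversion and Parseval's equation), whose argument is exactly your computation --- expand $\chi(D)$, exchange the two finite sums, and collapse the inner sum via the ``column'' orthogonality relation $\sum_{\chi}\chi(hg^{-1})=|G|\,[h=g]$, which is the one-dimensional case of the paper's equation~(\ref{kind_of_orthogonality_relations_2}). The only difference is that you additionally supply a proof of that orthogonality relation (via row orthogonality applied to $\widehat{G}$ and the separation property of characters), which the paper simply asserts; this is a welcome but not structurally different addition.
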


\begin{corollary}[Parseval's equation]\label{cor:parseval_abelian}
Let $G$ be an Abelian group. For $D=\sum_{g\in G} d_g g$ in
$\C[G]$, we have
$$
\sum_{g\in G} d_g^{\; 2} = \frac{1}{|G|}\sum_{\chi\in\hat{G}} |\chi(D)|^2.
$$
\end{corollary}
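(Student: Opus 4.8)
The plan is to prove the identity by expanding the right-hand side and collapsing the resulting double sum by means of the orthogonality of characters, which I can extract directly from the Inversion formula. Throughout I use that the values $\chi(g)$ lie on the unit circle, so that $\overline{\chi(g)}=\chi(g)^{-1}=\chi(g^{-1})$ for every character $\chi$ and every $g\in G$.

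First I would record the orthogonality relation $\sum_{\chi\in\hat{G}}\chi(x)=|G|$ if $x=1_G$ and $0$ otherwise. This is obtained by applying the Inversion formula to the group-algebra element $D=1_G$, the basis element indexed by the identity: here $d_g=1$ if $g=1_G$ and $0$ otherwise, while $\chi(1_G)=1$ for every $\chi$, so the formula reads $d_g=\frac{1}{|G|}\sum_{\chi}\chi(g^{-1})$, which is exactly the claimed relation after the substitution $x=g^{-1}$.

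Next I would expand the summand on the right. Writing $\chi(D)=\sum_{g}d_g\chi(g)$ and $\overline{\chi(D)}=\sum_{h}\overline{d_h}\,\chi(h^{-1})$, and using that $\chi$ is multiplicative, I get
$$
|\chi(D)|^2=\chi(D)\overline{\chi(D)}=\sum_{g,h\in G}d_g\overline{d_h}\,\chi(gh^{-1}).
$$
Summing over all $\chi\in\hat{G}$ and interchanging the order of summation, the inner sum $\sum_{\chi}\chi(gh^{-1})$ equals $|G|$ when $g=h$ and vanishes otherwise, by the orthogonality relation above. Hence only the diagonal terms survive and $\sum_{\chi}|\chi(D)|^2=|G|\sum_{g}d_g\overline{d_g}=|G|\sum_{g}|d_g|^2$. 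Dividing by $|G|$ yields the stated equation.

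The computation is essentially routine once orthogonality is available, so the main point to get right is the orthogonality relation itself; I would derive it from the Inversion formula as above rather than invoke it externally, keeping the argument self-contained. A secondary subtlety, which I would flag, is that the left-hand side features $d_g^{\,2}$ rather than $|d_g|^2$: these coincide precisely when the coefficients are real, which covers every case of interest here, namely difference-set indicators and the integer quantities $\delta_f(a,b)$ arising from $D_f D_f^{(-1)}$.
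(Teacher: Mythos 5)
Your proof is correct, but it is organized differently from the paper's. The paper's (one-line) argument works inside the group algebra: it computes the coefficient of the identity in the product $D\,D^{(-1)}$, which is $\sum_{g}|d_g|^2$ by the definition of the convolution product, and then reads off the same coefficient from the Fourier inversion formula, using that the extended character $\chi$ is an algebra homomorphism with $\chi\bigl(D^{(-1)}\bigr)=\overline{\chi(D)}$, so that $\chi\bigl(D D^{(-1)}\bigr)=|\chi(D)|^2$. You instead first extract the orthogonality relation $\sum_{\chi}\chi(x)=|G|$ for $x=1_G$ and $0$ otherwise (correctly, by applying the inversion formula to the basis element $1_G$), then expand $|\chi(D)|^2$ as a double sum over $g,h$ and collapse it by interchanging summations. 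The two computations are essentially transposes of one another and both bottom out at the inversion formula; yours trades the use of multiplicativity of $\chi$ on $\C[G]$ for an explicit re-derivation of orthogonality, which makes the argument slightly longer but more self-contained and arguably more transparent. Your closing remark that the statement's left-hand side reads $d_g^{\,2}$ rather than $|d_g|^2$, and that these agree only for real coefficients, is a legitimate observation: the identity as literally stated holds for real $d_g$ (which covers all uses in the paper), and the general complex version --- which is what both proofs actually establish --- has $|d_g|^2$ on the left, consistent with the non-Abelian version in Theorem~\ref{theo:inv-par-na}.
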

\begin{proof} This follows easily from the inversion formula applied to the
coefficient of the identity element in $D\cdot D^{-1}$.
\end{proof}

Using these characters we can introduce another criterion of nonlinearity: 
a function $f : K \rightarrow N$ (where 
$K$ and $N$ are both finite and Abelian) is called 
{\bf{maximum nonlinear}} if and only if 
\begin{equation}
\max_{\chi_N\not=\chi_0}|\chi(D_f)| \leq
 \max_{\chi_N \not=\chi_0}|\chi(D_g)|\quad  \forall g : K \rightarrow N
\end{equation}
or,  equivalently
\begin{equation}
\max_{\chi_N\not=\chi_0}|\chi(D_f)| = \min_{g:K\rightarrow N}\max_{\chi_N \not=\chi_0}|\chi(D_g)|\ .
\end{equation}

The value $\sqrt{|K|}$ is a lower bound for the quantity 
$\displaystyle \max_{\chi_N\not=\chi_0}|\chi(D_f)|$ which follows easily from Parseval's relation for Abelian groups (Corollary \ref{cor:parseval_abelian}). 
A function that reaches this theoretically best bound is called {\bf bent}. A function is bent if and only if it is   perfect nonlinear as defined above.. This follows easily by applying characters to the group ring equation (\ref{eq:rds}), see also \cite{CD04,pott}.
 
Finally, characters allow us to give another characterization for APN functions:
\begin{theorem}\label{theo:fourthpowers}
Let $K$ and $N$ be two finite Abelian groups. A function $f : K \rightarrow
 N$ is almost perfect nonlinear if and only if 
\begin{equation}
\displaystyle \sum_{\chi}|\chi(D_f)|^4 \leq \sum_{\chi}|\chi(D_g)|^4\ \forall
g : K \rightarrow N\ .
\end{equation}
\end{theorem}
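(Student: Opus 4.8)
The plan is to reduce the theorem to the defining inequality \eqref{eq:apn} by establishing the single function-independent identity
$$
\sum_{(a,b) \in G} \delta_f(a,b)^2 = \frac{1}{|G|} \sum_{\chi} |\chi(D_f)|^4,
$$
valid for every $f : K \rightarrow N$. Once this identity is in hand, the inequality in the statement and the inequality \eqref{eq:apn} are literally the same up to the positive constant $\frac{1}{|G|}$, so $f$ minimizes one sum over all $g$ exactly when it minimizes the other, and the claimed equivalence follows at once.

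To prove the identity I would start from the group-algebra element $E := D_f D_f^{(-1)} = \sum_{(a,b)} \delta_f(a,b)(a,b)$ recorded earlier in the excerpt, whose coefficients are precisely the values $\delta_f(a,b)$. Since these coefficients are non-negative integers, hence real, Parseval's equation (Corollary \ref{cor:parseval_abelian}) applies to $E$ and yields
$$
\sum_{(a,b)} \delta_f(a,b)^2 = \frac{1}{|G|} \sum_{\chi} |\chi(E)|^2.
$$

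It then remains to evaluate $\chi(E)$. Because each character extends to an algebra homomorphism of $\C[G]$, we have $\chi(E) = \chi(D_f)\,\chi(D_f^{(-1)})$; and since every character value lies on the unit circle, $\chi(g^{-1}) = \overline{\chi(g)}$, which gives $\chi(D_f^{(-1)}) = \overline{\chi(D_f)}$. Hence $\chi(E) = |\chi(D_f)|^2$, a non-negative real number, so $|\chi(E)|^2 = |\chi(D_f)|^4$. Substituting this into the Parseval expression completes the identity and therefore the theorem.

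There is no serious obstacle here: the argument is a clean application of Parseval combined with the multiplicativity of characters. The only points requiring care are that Parseval in the stated form uses $d_g^2$ rather than $|d_g|^2$, so one must observe that the coefficients $\delta_f(a,b)$ of $E$ are real before invoking it, and that the quadratic quantity $\sum_{(a,b)} \delta_f(a,b)^2$ appearing in the APN definition is genuinely the coefficient sum of the product $E = D_f D_f^{(-1)}$, not of $D_f$ itself.
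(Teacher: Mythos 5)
Your proof is correct and follows essentially the same route as the paper: the paper's sketch extracts the coefficient of the identity in $(D_fD_f^{(-1)})^2$, which is exactly what applying Parseval's equation to $E=D_fD_f^{(-1)}$ amounts to, since $E^{(-1)}=E$. Your added care about the reality of the coefficients $\delta_f(a,b)$ and the evaluation $\chi(E)=|\chi(D_f)|^2$ just fills in the details the paper leaves implicit.
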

\begin{proof} ({\it sketch}) As before, let $G=K\times N$.
Consider the coefficient of the identity
    element
in $(D_gD_g^{(-1)})^2$: this is precisely $\sum_{(a,b)\in G}\delta_g(a,b)^2$..
Therefore, minimizing $\sum_{(a,b)\in G}\delta_g(a,b)^2$ is equivalent to
minimizing $\sum_{\chi}|\chi(D_g)|^4$.
\end{proof}

\begin{remark}
\begin{enumerate}
\item If all character values $\chi(D_f)$ with $\chi_N\ne \chi_0$ are the same, then 
an almost perfect nonlinear function is actually perfect nonlinear.
\item Theorem \ref{theo:fourthpowers} is well known for the
elementary Abelian case (see \cite{chabaud}) and also known for the
Abelian case \cite{pott}. It is one of the
purposes of this paper to show that one can even extend it to the
non-Abelian case.
\end{enumerate}
\end{remark}

In some particular cases, we know a lower bound for the sum of the 
fourth-power of the absolute value of $\chi(D_f)$. 
Indeed when 
$K$ and $N$ are two elementary Abelian $2$-groups of order $m$, then it can
 be shown that for each $f : K \rightarrow N$, 
\begin{equation}\label{fourth_pow_1}
\displaystyle 2m^3(m-1)\leq \sum_{\chi\not=\chi_0} |\chi(D_f)|^4
\end{equation} 
and since $\chi_0(D_f)=|D_f|=|K|=m$, we have 
\begin{equation}\label{fourth_pow_2}
\displaystyle m^3 (3m-2)\leq \sum_{\chi}|\chi(D_f)|^4\ .
\end{equation}
It is very important to note that these arguments only work in the elementary
Abelian case: the proofs rely on the fact that the $\delta_f(a,b)$ are always 
even in characteristic 2.
We refer the reader to the original paper \cite{chabaud}, 
see also \cite{CD04,pott}. Together with Corollary \ref{cor:parseval_abelian},
 inequality~(\ref{fourth_pow_1}) yields yet another lower bound
 for the quantity $\displaystyle \max_{\chi_N\not=\chi_0}|\chi(D_f)|$ in the 
elementary Abelian case:
\begin{theorem}[\cite{chabaud}]
Let $f : K \rightarrow N$ where $K$ and $N$ are elementary Abelian $2$-groups 
of order $m=2^n$. Then 
\begin{equation}
\displaystyle \max_{\chi_N\not=\chi_0}|\chi(D_f)|\geq \sqrt{2m}\ .
\end{equation}
Moreover when $n$ is odd, a  function $f : K \rightarrow N$ is maximal nonlinear if and only if  
\begin{equation}
\displaystyle \max_{\chi_N\not=\chi_0}|\chi(D_f)|=\sqrt{2m}\ .
\end{equation} 
\noindent In this case, the function is almost perfect nonlinear and the lower bound given in 
inequality~(\ref{fourth_pow_2}) is reached.
\end{theorem}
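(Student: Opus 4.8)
The plan is to derive everything from Parseval's relation (Corollary~\ref{cor:parseval_abelian}), the fourth-moment bound~(\ref{fourth_pow_1}), and one elementary sandwiching inequality, reserving the genuinely group-theoretic input (existence of a function meeting the bound) for the ``only if'' direction.

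First I would record the second and fourth moments of the character values. Writing $G=K\times N$ with $|G|=m^2$ and using that $D_f$ has $0/1$ coefficients with $|D_f|=m$, Parseval gives $\sum_\chi|\chi(D_f)|^2=|G|\cdot m=m^3$. Since every character factors as $\chi=\chi_K\otimes\chi_N$ and $\chi(D_f)=\sum_{g\in K}\chi_K(g)\chi_N(f(g))$, the characters with $\chi_N=\chi_0$ collapse to $\sum_{g\in K}\chi_K(g)$, which equals $m$ for $\chi=\chi_0$ and $0$ otherwise. Subtracting the principal term yields
\begin{equation}
\sum_{\chi_N\neq\chi_0}|\chi(D_f)|^2=m^3-m^2=m^2(m-1),
\end{equation}
and the same observation shows that these vanishing characters contribute nothing to the fourth-power sum, so $\sum_{\chi\neq\chi_0}|\chi(D_f)|^4=\sum_{\chi_N\neq\chi_0}|\chi(D_f)|^4$. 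For the lower bound I would then sandwich the fourth moment: the trivial inequality $\sum_{\chi_N\neq\chi_0}|\chi(D_f)|^4\le\big(\max_{\chi_N\neq\chi_0}|\chi(D_f)|^2\big)\sum_{\chi_N\neq\chi_0}|\chi(D_f)|^2$ combined with~(\ref{fourth_pow_1}) gives
\begin{equation}
2m^3(m-1)\le\Big(\max_{\chi_N\neq\chi_0}|\chi(D_f)|^2\Big)\,m^2(m-1),
\end{equation}
whence $\max_{\chi_N\neq\chi_0}|\chi(D_f)|^2\ge 2m$, i.e. the claimed bound $\sqrt{2m}$.

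For the characterization when $n$ is odd, the ``if'' direction is immediate: a function meeting $\sqrt{2m}$ attains the universal lower bound just proved and therefore realizes $\min_g\max_{\chi_N\neq\chi_0}|\chi(D_g)|$, so it is maximal nonlinear. The ``only if'' direction requires that the bound is \emph{actually attained} for $n$ odd, and this is where I expect the main obstacle to lie, since it cannot follow from the inequalities alone. Here I would invoke the existence of a function with $\max_{\chi_N\neq\chi_0}|\chi(D_f)|=\sqrt{2m}$ (for instance a Gold/almost bent function, whose Walsh spectrum is $\{0,\pm 2^{(n+1)/2}\}$), using that $2^{(n+1)/2}=\sqrt{2m}$ is an integer precisely when $n$ is odd. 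This forces $\min_g\max=\sqrt{2m}$, so every maximal nonlinear $f$ meets it. The integrality of the character (Walsh) values is exactly the point where oddness of $n$ is indispensable: for $n$ even the value $\sqrt{2m}$ is irrational while the character values are integers, so the minimum is then strictly larger and the equivalence fails.

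Finally, the almost perfect nonlinearity statement follows by pushing the sandwich to equality. When $\max_{\chi_N\neq\chi_0}|\chi(D_f)|^2=2m$, the two displayed inequalities are forced to be equalities, giving $\sum_{\chi_N\neq\chi_0}|\chi(D_f)|^4=2m^3(m-1)$; adding the principal term $|\chi_0(D_f)|^4=m^4$ yields
\begin{equation}
\sum_\chi|\chi(D_f)|^4=m^4+2m^3(m-1)=m^3(3m-2),
\end{equation}
which is exactly equality in~(\ref{fourth_pow_2}). Since the right-hand side of~(\ref{fourth_pow_2}) is a lower bound valid for every $g:K\to N$, the function $f$ minimizes $\sum_\chi|\chi(D_g)|^4$, and Theorem~\ref{theo:fourthpowers} then identifies $f$ as almost perfect nonlinear, completing the proof.
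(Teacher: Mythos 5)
Your proof is correct and follows exactly the route the paper indicates (the paper gives no proof of its own, citing \cite{chabaud} and remarking that Parseval's relation together with inequality~(\ref{fourth_pow_1}) yields the bound): the second-moment computation $\sum_{\chi_N\neq\chi_0}|\chi(D_f)|^2=m^2(m-1)$, the sandwich giving $\max_{\chi_N\neq\chi_0}|\chi(D_f)|^2\geq 2m$, and the squeeze to equality that delivers the APN conclusion via Theorem~\ref{theo:fourthpowers} are precisely the standard Chabaud--Vaudenay argument. You also correctly isolate the one genuine external input needed for the ``only if'' direction, namely the existence of an almost bent function for odd $n$, which is consistent with the paper's own example $x\mapsto x^3$.
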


Contrary to the odd case, when $n$ is even obviously the previous lower bound cannot be reached. 
The lowest possible value 
of $\displaystyle \max_{\chi_N\not=\chi_0}|\chi(D_f)|$ satisfies (see, for
 instance \cite{carlet-survey-vectorial})
\begin{equation}\label{eq:m_even}
\displaystyle \sqrt{2m}<\min_{f:K\rightarrow N} \max_{\chi_N\not=\chi_0}|
\chi(D_f)|  \leq 2\sqrt{m}\ .
\end{equation} 
Moreover, the lowest known value corresponds to the upper bound. 

\begin{example}
\begin{enumerate}
\item The classical example of an almost perfect nonlinear 
function $f:\F_2^{\; n}\to \F_2^{\; n}$ is $f(x)=x^3$,
where the group is identified with the additive group of the
finite field $\F_{2^n}$. This function is also maximum nonlinear
if $n$ is odd. In the $n$ even case, it is a function where the
largest nontrivial  character value is $2^{\frac{n+2}{2}}$, hence
the upper bound in (\ref{eq:m_even}) is reached, see
\cite{carlet-survey-vectorial}, for instance. 
\item Similarly, the classical example of a perfect nonlinear mapping
is $x^2$ defined on $\F_{p^n}$, $p$ odd (this is folklore).
\end{enumerate} 
\end{example}

We note that recently many new perfect and almost perfect nonlinear
functions have been discovered. It seems that the ideas to construct 
perfect and almost perfect nonlinear functions are somewhat
similar, therefore the discovery of new almost perfect nonlinear
functions did influence the investigation of perfect nonlinear
functions which are of great interest in finite geometry. For the
characteristic $2$ case, we refer to \cite{carlet-survey-vectorial}
and the references cited there, in particular 
\cite{mcguire-ffa-2008,Budaghyan07D,Budaghyan06C,ekp,edel-amc};
for the perfect nonlinear case see 
\cite{BierBrauer2009-2,B-H:2011,B-H:2008,Zha2009}, for instance.  

The purpose of this paper is to develop  these
ideas - almost perfect nonlinearity and 
maximum nonlinear functions - in a larger context than the classical
elementary 
Abelian $2$-groups, namely the case 
of finite non-Abelian groups. For this objective we introduce the appropriate 
notion of "non-Abelian characters" in the following section.

\section{Basics on group representations}

\noindent The reader may find the definitions and results listed below (and
 lot more) in the book 
\cite{serre}.

A {\bf{(linear) representation}} of a finite group $G$ is a pair 
$(\rho,V)$ where $V$ is 
a finite-dimensional $\mathbb{C}$-vector space and $\rho$ is a group 
homomorphism from $G$ to the linear 
group $\mathit{GL}(V)$. In practice the group homomorphism $\rho$ is often 
identified with the representation 
$(\rho,V)$ and we may sometimes use the notation $V_{\rho}$ to refer to the
 second component of $(\rho,V)$. The {\bf{dimension}} of the representation
 $(\rho,V)$ is defined as the dimension of the 
vector space $V$ over the complex field and denoted by $\dim \rho$. Two 
representations 
$(\rho_1,V_1)$ and $(\rho_2,V_2)$ of the same group $G$ are said to be 
{\bf{equivalent}} 
if there exists a 
vector space isomorphism $T : V_1 \rightarrow V_2$ such that for each $g \in G$,
\begin{equation}
T \circ \rho_1 (g) = \rho_2 (g) \circ T\ .
\end{equation} 
Two equivalent representations may be identified without danger. 
Continuing with definitions, a representation $(\rho,V)$ is {\bf{irreducible}} 
if the only 
subvector spaces $W$ of $V$ so that $(\rho(g))(W)\subseteq W$ for every $g \in
G$ 
are the null-space 
 and 
$V$ itself. The {\bf{dual}} $\widehat{G}$ of a finite group $G$ is the set 
of all equivalence classes 
of irreducible representations of $G$. When $G$ is an Abelian group,
$\widehat{G}$ 
is  
dual group of $G$, as introduced
earlier. In particular a finite group is Abelian if and only if all 
its irreducible 
representations are one-dimensional. Let $[G,G]$ be the derived subgroup of
 $G$, {\it i.e.}, the subgroup generated 
by the elements of the form $ghg^{-1}h^{-1}$ for $(g,h) \in G^2$. It is a 
normal subgroup and the quotient-group $G/[G,G]$ is 
Abelian. One-dimensional representations of $G$ are related to this derived
 group since their total number is equal to 
the order of $G/[G,G]$. 
The cardinality $|\widehat{G}|$ of the dual of $G$ is equal to the number 
of conjugacy classes of $G$. 
The (equivalence class of the) representation $(\rho_0,\mathbb{C})$ that 
maps each element of $G$ to the 
identity map of $\mathbb{C}$ is called the {\bf{principal}} representation 
of $G$. For practical 
purpose we identify 
the value $\rho_0(g)$ as the number $1$ rather than the identity mapping of
 $\mathbb{C}$ or, in 
other terms, we identify $\rho_0 (g)$ and its trace $\mathit{tr}(\rho_0 (g))
=1$. 

We can also construct some linear representations from existing ones. 
For instance, given the dual sets of two 
finite groups $K$ and $N$, one can build up the dual $\widehat{K \times N}$
 of their direct product. This construction 
uses the notion of tensor product of two vector spaces. So let $V_1$ and $V_2$ 
be two complex vector spaces. We call 
{\bf{tensor product}} of $V_1$ and $V_2$ a vector space $W$ equipped with a
 bilinear mapping 
$j : V_1 \times V_2 \rightarrow W$ such that the following property holds. 
For each $\mathbb{C}$-vector space $V_3$ and for each bilinear mapping $f :
 V_1 \times V_2 \rightarrow V_3$ there exists 
one and only one linear mapping $\tilde{f} : W \rightarrow V_3$ such that 
$\tilde{f}\circ j = f$. 
It can be shown that there is one and only one such vector space $W$ (up to
 isomorphism). It is denoted by
$V_1 \otimes V_2$. Moreover if $(e^{(1)}_{k})_k$ is a basis of $V_1$ and 
$(e^{(2)}_{\ell})_{\ell}$ is a basis of $V_2$, then the family 
$(j(e^{(1)}_k,e^{(2)}_{\ell}))_{(k,\ell)}$ is a basis of $W$. This property shows that 
\begin{equation}
\dim (V_1 \otimes V_2) = \dim(V_1) \dim(V_2)\ .
\end{equation}
In particular for every vector space $V$, we have $V \otimes \mathbb{C} =
 \mathbb{C} \otimes V = V$ (equality up to 
isomorphism). 
For $(v_1,v_2)\in V_1 \times V_2$, we denote $j(v_1,v_2)$ by 
$v_1 \otimes v_2$. Since $j$ (and therefore $\otimes$) is 
a bilinear mapping we have $(\alpha v_1 + \beta w_1) \otimes v_2=\alpha  v_1 
\otimes v_2 + \beta w_1 \otimes v_2$ and the 
corresponding equality holds for the second variable. So now let 
$\rho_1 : K \rightarrow \mathit{GL}(V_1)$ and $\rho_2 : N \rightarrow 
\mathit{GL}(V_2)$ be two representations. Then we 
define a representation $\rho_1 \otimes \rho_2$ of $K \times N$ in $V_1 \otimes V_2$ by 
\begin{equation}
(\rho_1 \otimes \rho_2)(k,n) = \rho_1 (k) \otimes \rho_2
  (n)\quad  \mbox{with}\ (k,n) 
\in K \times N\ .
\end{equation}
Moreover, one can show that if $\rho_1$ and $\rho_2$ are both irreducible then 
$\rho_1 \otimes \rho_2$ is an irreducible 
representation of $K \times N$. And reciprocally, every irreducible
representation 
of $K \times N$ is equivalent to a 
representation of the form $\rho_1 \otimes \rho_2$, where $\rho_1$ 
(resp. $\rho_2$) is an irreducible representation of 
$K$ (resp. $N$). Given an irreducible representation $\rho$ of $K \times N$, 
we use $\rho_K$ and $\rho_N$ to denote 
the representations of $K$ and $N$ such that $\rho$ is equivalent to 
$\rho_K \otimes \rho_N$. A system of representatives 
of equivalence classes of irreducible representations of $K \times N$ is given 
by this way. Actually, the classes of representations of a given
  group $G$ may be used to form a ring: 
just take the free Abelian group generated by all isomorphism classes of
representations of $G$, 
mod out by the subgroup generated by elements of the form
$\rho_1+\rho_2-(\rho_1\oplus \rho_2)$, 
where $\rho_1\oplus \rho_2$ is the obvious direct sum of two (classes of)
representations. 
It can be proved that the irreducible representations form a basis 
for this $\mathbb{Z}$-module. The ring structure, called the {\bf
  representation ring of $G$}, 
is then given simply by tensor product
 $\rho_1\otimes\rho_2: g\in G \mapsto \rho_1(g)\otimes\rho_2(g)\in
 \mathit{GL}(V_1\otimes V_2)$, 
defined on these generators and extended by linearity.

Note that every irreducible 
representation is equivalent to a unitary representation {\it i.e.} a 
representation $\rho$ such that $\rho(g^{-1}) = 
\rho(g)^*$, where the star denotes the usual adjoint operation. Indeed, 
let  $\rho: G \to \mathit{GL}(V)$ be a representation of a finite group $G$, 
where $V$ is an Hermitian space (together with a scalar product
$\langle\cdot,\cdot\rangle$) 
and let us consider the following Hermitian product: 
$\langle x,y\rangle_{\rho}=\displaystyle\sum_{g\in
  G}\langle\rho(g)(x),\rho(g)(y)\rangle$. 
It is easy to prove that for every $x,y\in V$ and every $g\in G$, 
$\langle \rho(g)(x),\rho(g)(y)\rangle_{\rho}=\langle x,y\rangle_{\rho}$ in
such a 
way that $\rho(g)$ is unitary with respect to $\langle
\cdot,\cdot\rangle_{\rho}$ for each 
$g\in G$.  Therefore, in what follows we assume that $\widehat{G}$ 
is actually a complete set of representatives of non-isomorphic irreducible
representations, 
all of them being unitary.
 
We naturally extend a representation $(\rho,V)$  
by linearity to an algebra homomorphism from $\mathbb{C}[G]$ to 
$\mathit{End}(V)$, the linear endomorphisms of $V$, by
\begin{equation}
\rho (D) = \displaystyle \sum_{g\in G}d_g \rho(g)\ ,
\end{equation} 
with $D= \displaystyle \sum_{g\in G}d_g g$. As a particular case one can 
prove the following relations:
\begin{equation}\label{kind_of_orthogonality_relations}
\rho(G) = \left \{
\begin{array}{ll}
0_V &\mbox{if}\ \rho \not = \rho_0\ ,\\
|G| & \mbox{if}\ \rho =\rho_0
\end{array}\right .
\end{equation}
for every $\rho \in \widehat{G}$, and
\begin{equation}\label{kind_of_orthogonality_relations_2}
\displaystyle \sum_{\rho \in \widehat{G}}\dim \rho\ \mathit{tr}(\rho(g)) 
= \left \{
\begin{array}{cl}
|G| & \mbox{if}\ g = 1_G\ ,\\
0 & \mbox{if}\ g \not = 1_G\ .
\end{array}
\right .
\end{equation}
\noindent For $D \in \mathbb{C}[G]$ we may define its {\it Fourier transform} 
as 
$\widehat{D} = \displaystyle \sum_{\rho \in \widehat{G}}\rho(D)\rho$,
  which can be identified with $(\rho(D))_{\rho\in \widehat{G}}\in \displaystyle\bigoplus_{\rho\in\widehat{G}}\mathit{End}(V_{\rho})$. 
Then the 
Fourier transform is the mapping 
\begin{equation}
\begin{array}{llll}
\mathcal{F}:& \mathbb{C}[G]&\rightarrow & \displaystyle\bigoplus_{\rho
  \in\widehat{G}}
\mathit{End}(V_{\rho})\\
& D & \mapsto & \widehat{D}\ .
\end{array}
\end{equation}
Note that in the case where $G$ is a finite Abelian group, then 
every irreducible representation is one-dimensional and
$\mathit{End}(\mathbb{C})$ 
is 
isomorphic to $\mathbb{C}$. Therefore for $D \in \mathbb{C}[G]$, we have 
$\displaystyle \widehat{D}\in \bigoplus_{\rho \in \widehat{G}}\mathbb{C}\simeq 
\mathbb{C}[\widehat{G}]$.

The coefficients $\rho(D)$ of the Fourier transform $\widehat{D}$
 of $D$ are used in the sequel to 
study nonlinear properties of functions $f : K \rightarrow N$ in the general 
case where both $K$ and $N$ are finite groups, 
not necessarily commutative.

\section{Almost perfect nonlinearity} 

In this section, we develop a Fourier characterization of APN
functions 
in the non-Abelian setting, 
using group representations. We also introduce the relevant notion of bentness 
in this context and we 
show that, contrary to the classical case, this is not equivalent to
 perfect nonlinearity in this 
general framework.

Let $G$ be a finite group. The following theorem
is well known; we include a proof to keep the paper more self-contained.
\begin{theorem}[Fourier inversion, Parseval's equation]\label{theo:inv-par-na}
Let $D=\displaystyle \sum_{g\in G}d_g g$ be an element in the group algebra 
$\mathbb{C}[G]$. Then the following holds:
\begin{equation}
\displaystyle d_g = \frac{1}{|G|}\sum_{\rho \in \widehat{G}}\dim \rho\ 
\mathit{tr}(\rho(D)\circ \rho(g^{-1}))\quad 
(\mbox{Fourier\ inversion}) 
\end{equation}
\begin{equation}
\displaystyle \sum_{g \in G} |d_g|^2 = \frac{1}{|G|}\sum_{\rho\in\widehat{G}}
\dim \rho\ \left\|\rho(D)\right\|^2 
\quad (\mbox{Parseval's equation})
\end{equation}
where $\|f\|$ is the {\bf{trace norm}} of a linear endomorphism $f$ given by 
$\|f\|=\sqrt{\mathit{tr}(f\circ f^{*})}$. 
\end{theorem}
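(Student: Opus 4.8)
The plan is to prove the Fourier inversion formula first, and then derive Parseval's equation as a special case by a clever choice of $D$. The key computational tool will be the orthogonality relation~(\ref{kind_of_orthogonality_relations_2}), which is exactly the ``dual'' statement one expects to need for an inversion theorem.

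For the inversion formula, I would start from the right-hand side and substitute $D=\sum_{h\in G}d_h h$, using that $\rho$ is extended linearly so that $\rho(D)=\sum_{h\in G}d_h\rho(h)$. Thus
\begin{equation}
\frac{1}{|G|}\sum_{\rho\in\widehat{G}}\dim\rho\ \mathit{tr}(\rho(D)\circ\rho(g^{-1}))
=\frac{1}{|G|}\sum_{h\in G}d_h\sum_{\rho\in\widehat{G}}\dim\rho\ \mathit{tr}(\rho(h)\circ\rho(g^{-1})).
\end{equation}
Since each $\rho$ is a homomorphism, $\rho(h)\circ\rho(g^{-1})=\rho(hg^{-1})$, so the inner sum is $\sum_{\rho}\dim\rho\ \mathit{tr}(\rho(hg^{-1}))$. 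By~(\ref{kind_of_orthogonality_relations_2}) this equals $|G|$ when $hg^{-1}=1_G$, i.e. $h=g$, and $0$ otherwise. The whole expression therefore collapses to $d_g$, which is precisely the claim. This direction is essentially a one-line manipulation once the orthogonality relation is invoked.

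For Parseval's equation, the natural move (mirroring the Abelian proof sketched in Corollary~\ref{cor:parseval_abelian}) is to apply the inversion formula to the identity coefficient of the product $D\cdot D^{(-1)}$. Writing $E=D\cdot D^{(-1)}=\sum_{g\in G}e_g g$, a direct convolution computation gives $e_{1_G}=\sum_{g\in G}d_g\overline{d_g}=\sum_{g\in G}|d_g|^2$, which is the left-hand side. On the other side, inversion at $g=1_G$ yields $e_{1_G}=\frac{1}{|G|}\sum_{\rho}\dim\rho\ \mathit{tr}(\rho(E))$, and using the algebra-homomorphism property $\rho(E)=\rho(D)\circ\rho(D^{(-1)})$. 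The crucial fact I would use here is that, because we assume all $\rho\in\widehat{G}$ are unitary, $\rho(D^{(-1)})=\rho(D)^{*}$; this follows from $\rho(g^{-1})=\rho(g)^{*}$ together with conjugation of the coefficients. Hence $\mathit{tr}(\rho(E))=\mathit{tr}(\rho(D)\circ\rho(D)^{*})=\|\rho(D)\|^2$ by the definition of the trace norm, giving $\sum_{g}|d_g|^2=\frac{1}{|G|}\sum_{\rho}\dim\rho\ \|\rho(D)\|^2$ as required.

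The main obstacle I anticipate is verifying the identity $\rho(D^{(-1)})=\rho(D)^{*}$ cleanly; this is where the unitarity assumption on the representations (justified earlier via the averaged Hermitian product $\langle\cdot,\cdot\rangle_{\rho}$) does the real work, and it must be spelled out rather than taken for granted, since adjoints and complex conjugation of coefficients interact. Everything else reduces to the orthogonality relation and a routine convolution computation, so I would keep those brief.
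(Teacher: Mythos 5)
Your proposal is correct and follows essentially the same route as the paper's own proof: the inversion formula via the orthogonality relation~(\ref{kind_of_orthogonality_relations_2}), and Parseval's equation by applying inversion to the identity coefficient of $D\cdot D^{(-1)}$, with unitarity giving $\rho(D^{(-1)})=\rho(D)^{*}$. Your remark that the unitarity step deserves to be spelled out is a fair point of emphasis, but there is no substantive difference from the paper's argument.
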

\begin{proof}
We have 
\begin{equation}
\begin{array}{lll}
& & \displaystyle \sum_{\rho \in \widehat{G}}\dim \rho\ \mathit{tr}(\rho(D)\circ
\rho(g^{-1})) \\
& = & \displaystyle \sum_{\rho \in \widehat{G}}\mathit{tr}\left(\rho( \sum_{h \in
  G}d_h h) 
\circ \rho(g^{-1})\right)\\
&=&\displaystyle \sum_{h \in G}d_h \sum_{\rho \in \widehat{G}}\dim \rho\ 
\mathit{tr}(\rho(hg^{-1}))\ 
\mbox{(by\ linearity\ of\ the\ trace\ and\ $\rho$)}\\
&=&|G|d_g\ \mbox{(according to eq.~(\ref{kind_of_orthogonality_relations_2}))}.
\end{array}
\end{equation}
This proves the Fourier inversion formula. 

We have 
\begin{equation}
DD^{(-1)} =\displaystyle \sum_{g \in G}\left(\sum_{h \in G}d_h \overline{d_{g^{-1}h}}\right )g\ .
\end{equation}
In particular the coefficient of the identity $1_G$ in this formal sum is 
$\displaystyle \sum_{g\in G}|d_g|^2$. We can also compute this coefficient 
by using the Fourier inversion on $DD^{(-1)}$. 
It is given by 
\begin{equation}
\begin{array}{lll}
&&\displaystyle \frac{1}{|G|}\sum_{\rho \in \widehat{G}} \dim \rho\
\mathit{tr}(\rho(DD^{(-1)}))\\
 &=&
\displaystyle \frac{1}{|G|}\sum_{\rho \in \widehat{G}}\dim \rho\ 
\mathit{tr}(\rho(D)\circ\rho(D)^{*}))\ 
(\mbox{since\ $\rho$\ is\ unitary})\\
&=&\displaystyle \frac{1}{|G|}\sum_{\rho \in \widehat{G}}\dim \rho \|\rho(D)\|^2\ .
\end{array}
\end{equation}
We may assume that $\widehat{G}$ is a set of unitary representatives of irreducible representations.
Therefore Parseval's equation holds.
\end{proof}

Using group representations we obtain an alternative formulation 
for almost perfect 
nonlinearity. Note that the definition of
almost perfect nonlinearity given in eq. (\ref{eq:apn})
holds for the Abelian as well as non-Abelian situation. 
\begin{theorem}\label{APN_characterization}
Let $K$ and $N$ be two finite groups. Let 
$G$ be the direct product $K \times N$. 
A function $f : K \rightarrow N$ is almost perfect nonlinear if and 
only if 
\begin{equation}
\displaystyle \sum_{\rho \in \widehat{G}}\dim \rho\ \|\rho(D_f)\|^4 \leq 
\sum_{\rho \in \widehat{G}}
\dim \rho\ \|\rho(D_g)\|^4,\ \forall g : K \rightarrow N\ .
\end{equation}
\end{theorem}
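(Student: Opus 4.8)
The plan is to reduce the statement to the defining inequality~(\ref{eq:apn}) and then rewrite $\sum_{(a,b)\in G}\delta_f(a,b)^2$ through the Fourier machinery of Theorem~\ref{theo:inv-par-na}, exactly as in the sketch accompanying Theorem~\ref{theo:fourthpowers}, but now keeping track of the fact that the $\rho(D_f)$ are operators rather than scalars. Write $E=D_fD_f^{(-1)}=\sum_{(a,b)\in G}\delta_f(a,b)(a,b)$. The anti-automorphism property gives $E^{(-1)}=(D_fD_f^{(-1)})^{(-1)}=D_fD_f^{(-1)}=E$, and the coefficients $\delta_f(a,b)$ are real and nonnegative; hence the coefficient of the identity $1_G$ in $E\,E^{(-1)}=E^2$ is exactly $\sum_{(a,b)\in G}\delta_f(a,b)^2$, by the same bookkeeping that identifies the identity-coefficient of $DD^{(-1)}$ with $\sum_g|d_g|^2$.

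First I would apply Parseval's equation (Theorem~\ref{theo:inv-par-na}) to the element $E$ itself, obtaining
\[
\sum_{(a,b)\in G}\delta_f(a,b)^2=\frac{1}{|G|}\sum_{\rho\in\widehat{G}}\dim\rho\,\|\rho(E)\|^2 .
\]
Since every $\rho\in\widehat{G}$ is unitary we have $\rho(D_f^{(-1)})=\rho(D_f)^*$, so $\rho(E)=\rho(D_f)\rho(D_f)^*$ is positive semidefinite Hermitian and $\|\rho(E)\|^2=\mathit{tr}\!\left((\rho(D_f)\rho(D_f)^*)^2\right)$. Because the factor $1/|G|$ is a positive constant independent of $f$, minimising $\sum_{(a,b)}\delta_f(a,b)^2$ over all $g:K\to N$ is equivalent to minimising $\sum_{\rho}\dim\rho\,\|\rho(E)\|^2$; combining this with definition~(\ref{eq:apn}) yields the characterisation, the constant cancelling from both sides of the inequality.

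The step I expect to require the most care is the identification of $\|\rho(E)\|^2$ with the displayed quantity $\|\rho(D_f)\|^4$. In the Abelian case each $\rho$ is one-dimensional, $\rho(D_f)$ is a scalar, and $\mathit{tr}\!\left((\rho(D_f)\rho(D_f)^*)^2\right)=|\rho(D_f)|^4=\|\rho(D_f)\|^4$, recovering Theorem~\ref{theo:fourthpowers} verbatim. For $\dim\rho>1$ the two expressions genuinely differ: for a positive semidefinite operator $A$ one has only $\mathit{tr}(A^2)\le(\mathit{tr}\,A)^2$, with equality precisely when $A$ has rank at most one, whereas $A=\rho(D_f)\rho(D_f)^*$ is typically of higher rank (for instance, whenever $f$ is perfect nonlinear, $\rho(E)$ is a nonzero multiple of the identity at every $\rho$ nontrivial on $N$). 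Thus the quantity the Fourier computation actually produces is $\|\rho(D_fD_f^{(-1)})\|^2=\mathit{tr}\!\left((\rho(D_f)\rho(D_f)^*)^2\right)$, and the honest reading of the theorem is that the fourth-order moment $\|\rho(D_f)\|^4$ must be taken to mean this $\|\rho(D_fD_f^{(-1)})\|^2$ rather than the literal $(\mathit{tr}\,\rho(D_f)\rho(D_f)^*)^2$; pinning down this correct non-Abelian fourth moment is the real content of the extension, the rest being a direct transcription of Parseval's equation.
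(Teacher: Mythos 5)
Your argument is, step for step, the paper's own proof: apply Parseval's equation (Theorem~\ref{theo:inv-par-na}) to $E=D_fD_f^{(-1)}=\sum_{(a,b)}\delta_f(a,b)(a,b)$, use unitarity to write $\rho(E)=\rho(D_f)\circ\rho(D_f)^*$, and conclude from the definition of APN in eq.~(\ref{eq:apn}), the factor $1/|G|$ cancelling. The caveat in your last paragraph is not a defect of your write-up but a genuine gap in the paper's: the final line of the paper's displayed computation asserts $\|\rho(D_f)\circ\rho(D_f)^*\|^2=\|\rho(D_f)\|^4$ with no justification, and under the paper's own definition $\|A\|=\sqrt{\mathit{tr}(A\circ A^*)}$ the left-hand side is $\mathit{tr}\bigl((\rho(D_f)\rho(D_f)^*)^2\bigr)=\sum_i\lambda_i^{\,2}$ while the right-hand side is $\bigl(\mathit{tr}(\rho(D_f)\rho(D_f)^*)\bigr)^2=\bigl(\sum_i\lambda_i\bigr)^2$, where the $\lambda_i\geq 0$ are the eigenvalues of $\rho(D_f)\rho(D_f)^*$. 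These agree only when that operator has rank at most one, so the identification is automatic exactly when $\dim\rho=1$, i.e.\ in the Abelian case the theorem is meant to generalize; your observation that a perfect nonlinear $f$ gives $\rho(E)=m\,\mathit{Id}_V$ (full rank) shows the discrepancy is not hypothetical. What the Parseval computation actually establishes is the characterization with $\sum_{\rho}\dim\rho\,\|\rho(D_gD_g^{(-1)})\|^2$, equivalently $\sum_{\rho}\dim\rho\;\mathit{tr}\bigl((\rho(D_g)\rho(D_g)^*)^2\bigr)$, as the quantity to be minimized; the statement as literally printed requires either reading $\|\rho(D_f)\|^4$ as that fourth moment (as you propose) or a further argument that the paper does not supply. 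You are right to flag this, and your version is the one that is actually proved.
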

\begin{proof}
We have $\displaystyle D_f D_f^{(-1)} = \sum_{(a,b)\in G}\delta_f(a,b)(a,b)$. 
So using Parseval's equation we obtain: 
\begin{equation}
\begin{array}{lll}
\displaystyle \sum_{(a,b)\in G}\delta_f (a,b)^2 &=&\displaystyle \frac{1}{|G|}
\sum_{\rho \in \widehat{G}}
\dim\rho\ \|\rho(D_f D_f^{(-1)})\|^2\\
&=&\displaystyle \frac{1}{|G|}\sum_{\rho \in \widehat{G}}\dim\rho\ \|\rho(D_f)\circ \rho(D_f)^{*}\|^2\\
&=&\displaystyle \frac{1}{|G|}\sum_{\rho \in \widehat{G}}\dim\rho\ \|\rho(D_f)\|^4\ .
\end{array}
\end{equation}
Because a function $f : K \rightarrow N$ is APN if and only if for every 
$g : K \rightarrow N$, 
\begin{equation}
\displaystyle \sum_{(a,b)\in G}\delta_f (a,b)^2 \leq \sum_{(a,b)\in G}\delta_g (a,b)^2,
\end{equation} 
this concludes the proof.
\end{proof}
\noindent Group representations can be used to find another criterion for the 
nonlinearity of functions. As before, $K$ and 
$N$ are both finite groups of 
order $m$ and $n$, and $f : K \rightarrow N$. 
For some $\rho \in \widehat{K\times N}$, 
the values of $\rho(D_f)$  are known:
\begin{equation}\label{know_values_of_rho_Df}
\rho(D_f) = \left \{
\begin{array}{ll}
m & \mbox{if}\ \rho=\rho_0\ ,\\
0_{V} &\mbox{if}\ \rho=\rho_K \otimes
\rho_0\ \mbox{and}\ (\rho_K,V)\ \mbox{is\ nonprincipal on}\ K\ .
\end{array}
\right .
\end{equation}
\noindent Indeed first let us suppose that $\rho$ is principal on $G$. Then we
 have 
\begin{equation}
\begin{array}{lll}
\rho_0 (D_f)&=&\displaystyle \sum_{(a,b)\in G}1_{D_f}(a,b)\rho_0(a,b)\\
&=&\displaystyle \sum_{(a,b)\in G}1_{D_f}(a,b)\\
&=&|D_f|\\
&=&|K|\\
&=&m\ .
\end{array}
\end{equation}
\noindent Now let us suppose that $\rho = \rho_K \otimes \rho_0$ with
$(\rho_K,V)$ 
nonprincipal on $K$. Then we have 
\begin{equation}
\begin{array}{lll}
\rho(D_f) &=&\displaystyle \sum_{(a,b)\in G}1_{D_f}(a,b)\rho_K (a)\otimes
 \rho_0 (b)\\
&=&\displaystyle \sum_{a \in K}\rho_K (a)\otimes \rho_0 (f(a))\\
&=&\displaystyle \sum_{a \in K}\rho_K (a)\ (\mbox{since\ $V \otimes \mathbb{C} = V$})\\
&=&\rho_K (K)\\
&=&0_V\ (\mbox{according to eq.~(\ref{kind_of_orthogonality_relations})})
..
\end{array}
\end{equation}
\noindent Parseval's equation and an analogy with the Abelian case, suggest
 us to say that a function 
$f : K \rightarrow N$ is 
called {\bf{maximum nonlinear}} if and only if 
the value $\sqrt{\dim\rho} \|\rho(D_f)\|$ is as small as possible, or in other terms (using 
the known values of $\rho(D_f)$)
\begin{equation}\label{eq:maxnon}
\displaystyle \max_{\rho_N \not = \rho_0}\sqrt{\dim\rho}\ \|\rho(D_f)\|\leq 
\max_{\rho_N \not=\rho_0}
\sqrt{\dim\rho}\ \|\rho(D_g)\|\ 
\forall g:K \rightarrow N\ .
\end{equation}
As in the 
Abelian case, we obtain the following lower bound for this quantity:
\begin{theorem}
Let $f : K \rightarrow N$. Then 
\begin{equation}
\displaystyle \max_{\rho_N \not = \rho_0} \dim\rho\ \|\rho(D_f)\|^2 \geq \frac{m^2 (n-1)}{|\widehat{K}|(|\widehat{N}|-1)}\ .
\end{equation}
\end{theorem}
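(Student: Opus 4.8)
The plan is to apply Parseval's equation from Theorem~\ref{theo:inv-par-na} to the group algebra element $D_f$, extract the total ``spectral energy'' of the graph, and then separate off the known values of $\rho(D_f)$ recorded in eq.~(\ref{know_values_of_rho_Df}) before closing with an averaging argument. First I would note that $D_f$ is the indicator of the graph, so every coefficient $d_g$ equals $0$ or $1$ and hence $\sum_{g\in G}|d_g|^2 = |D_f| = m$. Substituting this into Parseval's equation and using $|G| = mn$ gives the identity
$$\sum_{\rho\in\widehat{G}}\dim\rho\,\|\rho(D_f)\|^2 = m^2 n.$$

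Next I would decompose this sum according to whether the $N$-component $\rho_N$ of an irreducible $\rho \simeq \rho_K\otimes\rho_N$ of $G=K\times N$ is principal. By eq.~(\ref{know_values_of_rho_Df}), among the representations with $\rho_N=\rho_0$ only the fully principal one $\rho=\rho_0$ gives a nonzero contribution, namely $\dim\rho_0\cdot\|\rho_0(D_f)\|^2 = 1\cdot m^2$, while every $\rho_K\otimes\rho_0$ with $\rho_K\neq\rho_0$ satisfies $\rho(D_f)=0_V$. Therefore the $\rho_N=\rho_0$ block contributes exactly $m^2$, and subtracting from the identity above yields
$$\sum_{\rho_N\neq\rho_0}\dim\rho\,\|\rho(D_f)\|^2 = m^2(n-1).$$

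Finally I would count the index set: the irreducible representations $\rho\simeq\rho_K\otimes\rho_N$ with $\rho_N\neq\rho_0$ are parametrized by a free choice of $\rho_K\in\widehat{K}$ together with $\rho_N\in\widehat{N}\setminus\{\rho_0\}$, so there are exactly $|\widehat{K}|\,(|\widehat{N}|-1)$ of them. Since each summand $\dim\rho\,\|\rho(D_f)\|^2$ is nonnegative, the maximum over this index set is at least its average, and that average is precisely $\frac{m^2(n-1)}{|\widehat{K}|(|\widehat{N}|-1)}$, which is the claimed lower bound.

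I do not anticipate a genuine obstacle here, as this mirrors the elementary Abelian counting argument behind inequality~(\ref{fourth_pow_1}). The only two points demanding care are: correctly verifying that the entire ``$\rho_N=\rho_0$'' block collapses to a single $m^2$ (so that the off-principal $\rho_K\otimes\rho_0$ terms genuinely vanish rather than merely being small), and getting the cardinality $|\widehat{K}|\,(|\widehat{N}|-1)$ right, since the max-versus-average inequality is only as sharp as the number one divides by. Both follow directly from eq.~(\ref{know_values_of_rho_Df}) and from the fact that $\widehat{K\times N}$ is indexed by pairs from $\widehat{K}\times\widehat{N}$.
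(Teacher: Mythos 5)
Your proposal is correct and follows essentially the same route as the paper's own proof: Parseval's equation gives the total $m^2n$, the known values in eq.~(\ref{know_values_of_rho_Df}) collapse the $\rho_N=\rho_0$ block to $m^2$, and the bound follows by comparing the maximum to the average over the $|\widehat{K}|(|\widehat{N}|-1)$ representations nonprincipal on $N$. No gaps.
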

\begin{proof}
By Parseval's equation applied to $D_f$, we have
\begin{equation}
\begin{array}{lll}
\displaystyle \frac{1}{|G|}\sum_{\rho \in \widehat{G}}\dim\rho\ \|\rho(D_f)\|^2 &=&
\displaystyle \sum_{(a,b)\in G}1_{D_f}(a,b)^2\\
&=&m\ .
\end{array}
\end{equation}
So we have 
\begin{equation}\label{eq1}
\displaystyle \sum_{\rho \in \widehat{G}}\dim\rho\ \|\rho(D_f)\|^2=|G|m=m^2 n\ .
\end{equation}
We know some values of 
$\rho(D_f)$ that allow us to compute the following sum:
\begin{equation}
\begin{array}{lll}
\displaystyle \sum_{\rho_N \not = \rho_0}\dim\rho\ \|\rho(D_f)\|^2 &=& 
\displaystyle 
\sum_{\rho \in \widehat{G}}\dim\rho\ \|\rho(D_f)\|^2 - \sum_{\rho_N = \rho_0}\dim\rho\ \|\rho(D_f)\|^2\\
&=&\displaystyle m^2 n - \underbrace{\dim\rho_0\ \|\rho_0 (D_f)\|^2}_{=m^2} - 
\underbrace{\sum_{\rho_N = \rho_0,\ \rho\not=\rho_0}\dim\rho\ 
\|\rho(D_f)\|^2}_{=0}\\
&&(\mbox{according\ to\ eq.~(\ref{eq1})\ and\ eq.~(\ref{know_values_of_rho_Df})})\\
&=&m^2(n-1)\ .
\end{array}
\end{equation}
Now we need to evaluate the number of principal representations on $N$: it 
is equal to $|\widehat{K}|$. Then there are 
$|\widehat{G}|-|\widehat{K}|$ nonprincipal representations on $N$. But 
$|\widehat{G}|=|\widehat{K}||\widehat{N}|$. 
Therefore we have
\begin{equation}
\displaystyle \max_{\rho_N \not = \rho_0}\dim\rho\ \|\rho(D_f)\|^2 \geq \frac{m^2(n-1)}{|\widehat{K}|(|\widehat{N}|-1)}\ .
\end{equation}
\ 
\end{proof}

The proof also shows that 
\begin{equation}\label{nonabelian_bent}
\displaystyle \max_{\rho_N \not = \rho_0}\dim\rho\ \|\rho(D_f)\|^2 = \frac{m^2(n-1)}{|\widehat{K}|(|\widehat{N}|-1)} 
\quad \Leftrightarrow \quad  \forall \rho_N \not=\rho_0,\ \|\rho(D_f)\|^2=
\Gamma,
\end{equation}
where $$\Gamma=
\frac{m^2(n-1)}{\dim\rho |\widehat{K}|(|\widehat{N}|-1)}\ .$$
In the Abelian case, the righthand side of this equivalence turns 
out to be the
 definition of bent functions since 
$\dim\rho = 1$, $|\widehat{K}|=m$ and $|\widehat{N}|=n$. It is well known 
that classical bentness is equivalent to 
perfect nonlinearity, as we stated before. Now if we take the righthand side of 
equivalence~(\ref{nonabelian_bent}) as a natural definition for 
{\bf bentness} 
in the non-Abelian case, we can prove this notion to be nonequivalent to
 perfect nonlinearity in many situations. 
Let us suppose that 
$f : K \rightarrow N$ is both perfect nonlinear and bent. Since $f$ is perfect nonlinear 
its graph $D_f$, as an element of $\mathbb{Z}[G]$, 
satisfies the 
famous group ring equation (\ref{eq:rds}) for 
 relative $(m,n,m,\frac{m}{n})$. 
difference sets 
\begin{comment}
\begin{equation}
D_f D_f^{(-1)} = m 1_G + \lambda (G - N)\ .
\end{equation} 
\end{comment}
So for $(\rho,V) \in \widehat{G}$, we have
\begin{equation}
\rho(D_f D_f^{(-1)})=m\mathit{Id}_V + \lambda (\rho(G)-\rho(N))\ ,
\end{equation}
where
$\mathit{Id}_V$ denotes the identity mapping of $V$. Now let us  
suppose 
that $\rho_N \not = \rho_0$. So we have 
$\rho(G)=0_V$ and $\rho(N)=\rho_K (1_K)\otimes\rho_N (N)=0_V$ 
according to eq.~(\ref{kind_of_orthogonality_relations}). 
Then in this case we obtain:
\begin{equation}
\rho(D_f D_f^{(-1)})=\mathit \rho(D_f)\circ\rho(D_f)^*=m\mathit{Id}_V\ .
\end{equation}
By using the trace on both sides of the last equality above, we 
have for $\rho_N \not = \rho_0$
\begin{equation}\label{first_equality}
\|\rho(D_f)\|^2 = m\dim\rho .
\end{equation}
But since $f$ is bent, combining eq.~(\ref{first_equality}) and the 
righthand side of equivalence~(\ref{nonabelian_bent}),
\begin{equation}
m(n-1)=(\dim\rho)^2|\widehat{K}|(|\widehat{N}|-1)\ .
\end{equation}
This equality may hold if and only if $\dim\rho$ is the same for every $\rho 
\in\widehat{G}$ such that 
$\rho_N \not = \rho_0$ (we exclude the trivial case where $|N|=1$); this 
is for instance the case if both $K$ and $N$ are Abelian. We can show that 
in the case 
where at least one of $K$ or $N$ is non-Abelian and distinct of its derived
 group\footnote{If for a nonabelian group $H$, 
$H\not=[H,H]$ then $H$ is not a simple group.}  (for instance it is a non 
sovable group), 
and $N$, when Abelian, is not reduced to the trivial group, 
then the previous equality cannot hold and therefore 
perfect nonlinearity and bentness - as introduced by analogy - are
nonequivalent 
and even paradoxical in this 
non-Abelian setting: 
\begin{arabiclist}
\item First let us
suppose that $N$ is a non-Abelian group such that $N \not = [N,N]$ (this 
is the case of non solvable groups). By assumption $N$ has at least one
irreducible 
representation 
of dimension $d> 1$ (it is {\it ipso facto} a nonprincipal representation),
 call it $\rho^{(1)}_N$. 
Since the number of one-dimensional 
representations of $N$ is exactly $|N/[N,N]|$, there is also at least one such 
representation which is nonprincipal 
(for if $|N/[N,N]|=1$ then $N = [N,N]$). Call $\rho^{(2)}_N$ one of them.. 
Now let $\rho_K$ be any $d'$-dimensional irreducible representation of $K$,
 then $\rho_K \otimes \rho^{(i)}_N$ 
(for $i=1,2$) are both 
irreducible nonequivalent representations of $G=K \times N$ which are non
principal on $N$. We have 
$\dim \rho_K \otimes \rho_N^{(1)} = d'd > \dim\rho_K \otimes \rho^{(2)}_N =d'$.
\item Secondly, let us
suppose that $K$ is a non-Abelian group such that $K \not
  = 
[K,K]$ and, 
if $N$ is non-Abelian then $N\not=[N,N]$ and if $N$ is Abelian then it is
 not reduced to the trivial group, {\it i.e.}, $|N|>1$. 
According to case 1., we already 
know that $K$ has at least one nonprincipal one-dimensional representation,
 call it $\rho^{(1)}_K$, and at least one 
irreducible representation 
of dimension $d > 1$, call it $\rho^{(2)}_K$. If $N$ is Abelian, since it is 
not the trivial group, it has at least one 
nonprincipal (one-dimensional) representation. 
According to 1., this is also the case if $N$ is non-Abelian and $N \not = 
[N,N]$; call $\rho_N$ this representation. Then 
$\rho^{(i)}_K \otimes \rho_N$ (for $i=1,2$) are both irreducible nonequivalent 
representations of $G = K \times N$ which are 
nonprincipal on $N$. We have 
$\dim \rho^{(1)}_K \otimes \rho_N = 1 < \dim\rho^{(2)}_K \otimes \rho_N =d$.
\end{arabiclist}

\section{Some computational results and open questions}

In this section we present some results given by formal computations 
using GAP \cite{LintonGAP} and MAGMA \cite{MAGMA}.
There are three goals:
\begin{itemize}
\item {\it Almost perfect nonlinearity}, see Theorem  \ref{APN_characterization}: Minimize 
$$
\displaystyle \sum_{\rho \in \widehat{G}}\dim \rho\ \|\rho(D_f)\|^4 
$$
for functions $f:K\to N$, where $G=K\times N$.
\item {\it Maximal nonlinearity}, see eq. (\ref{eq:maxnon}): Minimize the maximum of
$$
\sqrt{\dim\rho}\ \|\rho(D_f)\|
$$
for all $f:K\to N$.
\item {\it Bentness}, see eq. (\ref{nonabelian_bent}): Find functions $f:K\to N$ such that 
$$
\forall \rho_N \not=\rho_0,\ \|\rho(D_f)\|^2=
\frac{m^2(n-1)}{\dim\rho |\widehat{K}|(|\widehat{N}|-1)}.
$$
\end{itemize}
We note that besides the bounds given in this paper,
no general results for arbitrary groups are known. 
Here we do not want to give extensive computational results but we
want to indicate some interesting phenomena which occur
in the non-Abelian setting.
Note that for groups $K$ and $N$, we have to go through
{\bf all} mappings $K\to N$, whose number  is $|N|^{|K|}$. 
Hence the approach to 
find ``good'' functions by a complete search is very limited, and 
it would be interesting to find theoretical constructions
which are {\bf provable} maximal nonlinear or almost perfect nonlinear.

Let us begin with the two goups of order $6$,
the cyclic group $\Z_6$ and the symmetric group
$\mathcal{S}_3$. In Table 1, we list the
``best'' values (marked in boldface) both for the maximum nonlinearity 
as well as 
almost perfect nonlinearity.

\begin{table}[h]
\tbl{Non-Abelian groups $K$, $N$ of order $6$, $f:K\to N$}
{\begin{tabular}{@{}cccc@{}} \toprule
$K$ & $N$ & $\displaystyle 
\min_{g:K\rightarrow N}\sum_{\rho\in\widehat{G}}\dim\ \rho\|\rho(D_g)\|^4$ & 
$\displaystyle 
\min_{g:K\rightarrow N}\max_{\rho|_N \not=\rho_0}\sqrt{\dim\ \rho}\|\rho(
D_g)\|$\\ \colrule
$\mathcal{S}_3$ & $\mathbb{Z}_6$ & {\bf{2376}} & 4 \\
$\mathbb{Z}_6$ & $\mathcal{S}_3$ & 3972 & $4\sqrt{2}$  \\
$\mathcal{S}_3$ & $\mathcal{S}_3$ & 3552 & $2\sqrt{14}$ \\
$\mathbb{Z}_6$ & $\mathbb{Z}_6$ & 2808 & ${\bf{2\sqrt{3}}}$ \\ \botrule
\end{tabular}}
\end{table}

It is  remarkable that the measure of almost
 perfect nonlinearity is by far the least in the 
case $(K,N)=(\mathcal{S}_3,\mathbb{Z}_6)$ and not in the 
Abelian case. Moreover, if $(K,N)=(\mathcal{S}_3,\mathbb{Z}_6)$
each almost perfect nonlinear functions is also maximal nonlinear. This
situation 
fails to be true in all the other cases 
where no almost perfect nonlinear function is also maximal nonlinear. 

We also made a complete search for the case of abelian groups of order $8$
(Table 2).

\begin{table}[h]
\tbl{Abelian groups $K$, $N$ of order $8$, $f:K\to N$}
{\begin{tabular}{@{}cccc@{}} \toprule
$K$ & $N$ & $\displaystyle 
\min_{g:K\rightarrow N}\sum_{\rho\in\widehat{G}}|\rho(D_g)|^4$ & 
$\displaystyle 
\min_{g:K\rightarrow N}\max_{\rho|_N \not=\rho_0}|\rho(D_g)|$\\ \colrule
$\Z_8$ & $\Z_8$ & 8832 &  $\sqrt{10+4\sqrt{2}}$\\
$\Z_8$ & $\Z_4\times \Z_2$ & 8576 & 4  \\
$\Z_8$& $\Z_2\times \Z_2\times \Z_2$ & 9216 & 4 \\
$\Z_4\times \Z_2$ & $\Z_8$ & 8960 & 4 \\
$\Z_4\times \Z_2$ & $\Z_4\times \Z_2$ & 9216 & 4  \\
$\Z_4 \times \Z_2$& $\Z_2\times \Z_2\times \Z_2$ & 10240 & 4 \\
$\Z_2\times \Z_2\times \Z_2$ & $\Z_8$ & 9216 & 4 \\
$\Z_2\times \Z_2\times \Z_2$ & $\Z_4\times \Z_2$ &  10240&  4  \\
$\Z_2\times \Z_2\times \Z_2$ & $\Z_2\times \Z_2\times \Z_2$ & 11264 & 4 \\
\botrule
\end{tabular}}\end{table}

The case $K=N=\Z_8$ is of interest: Here the maximum 
character value is strictly smaller than $4$, hence there
is a function which is ``better'' than in the elementary-Abelian
case\footnote{An example of the graph of such a function is
$\{(0,0),(1,5),(2,7),(3,7),(4,7),(5,4),(6,5),(7,4)\}$}. In contrast to the other cases, this ``smallest'' maximum
is not reached for the functions which minimize the sum of
the fourth powers of the character values (which is 8960).

We also searched  for bent functions from $\mathcal{S}_3$ to
 a group $N$ such that 
$1 < |N| \leq 5$.  Note that, in contrast to the Abelian case,
the existence question for bent functions $K\to N$ 
is also meaningful if $|N|$ is not a divisor of $|K|$.
The results that we  obtained are as follows: First of all,
there is no bent functions if $N \in \{\mathbb{Z}_2,\mathbb{Z}_4,
\mathbb{Z}_2\times\mathbb{Z}_2,
\mathbb{Z}_5\}$.
But there is (at least) one bent function $f : \mathcal{S}_3 \rightarrow 
\mathbb{Z}_3$. Indeed we can check that the map 
$f$ defined by $f(\mathit{id})=f((1,2))=f((2,3))=f((1,3))=0$ and 
$f((1,2,3))=f((1,3,2))=1$ is bent, that is to say that 
$
\sqrt{\dim \rho}\|\rho(D_f)\|=2\sqrt{3}$ for all 
$\rho|\Z_3\ne \rho_0$. The group
$\mathcal{S}_3$ has two one-dimensional  
representations and one two-dimensional representation, and $\mathbb{Z}_3$ has 
three one-dimensional representations
therefore 
$\frac{|\mathcal{S}_3|^2(|\mathbb{Z}_3|-1)}{|\widehat{\mathcal{S}_3}|(\widehat{\mathbb{Z}_3}-1)|}=12$.
We have 
$\displaystyle \sqrt{\dim \rho}\|\rho(D_f)\|=2\sqrt{3}$ for each 
representation 
$\rho \in \widehat{\mathcal{S}_3\times \mathbb{Z}_3}$ which is 
nonprincipal on $\mathbb{Z}_3$. 
Due to the fact that $\mathcal{S}_3$ has a representation of dimension 
greater than $1$, $f$ is not 
perfect nonlinear (see the discussion at the end of the previous section).

Finally, we would like to raise some questions. 
Regarding these questions,
both computational
results as well as infinite families are, of course, welcome.

\begin{enumerate}
\item Regarding the caracterisation of APN functions by mean of the sum 
$\displaystyle \sum_{\rho \in \widehat{G}}\dim \rho\ \|\rho(D_f)\|^4$ given 
in 
Theorem~\ref{APN_characterization}, 
it should be interesting to find some 
functions $f : K \rightarrow N$ for which this 
sum reaches some value which is  better than the one in the 
classical case of elementary Abelian groups. 
\item Find functions $f:K\to N$  such that 
$\dim\rho\|\rho(D_f)\|^2=
\frac{m^2(n-1)}{|\widehat{K}|(|\widehat{N}|-1)}$ for all
$\rho$ nonprincipal on $N$
{\it i.e.} non-Abelian bent functions).
\item Find functions $f : K \rightarrow N$ with $|K|=|N|=2^n=m$ ($n$ even 
so that there are no almost bent functions) such that 
$\displaystyle \max_{\rho_N \not=\rho_0}\dim\rho\|\rho(D_f)\|^2 < 4m$, that 
is to say functions which are better than 
the known maximum nonlinear functions in the elementary-Abelian case.
\item Find functions $f : K \rightarrow N$ with $|K|=|N|=2^n=m$ ($n$ odd, so 
that there are almost bent functions) such that 
$\displaystyle \max_{\rho_N \not=\rho_0}\dim\rho\|\rho(D_f)\|^2 < 2m$, 
that is to say functions which are better than 
classical almost bent functions. Note that we found such an
example for a mapping $f:\Z_8\to\Z_8$.

\end{enumerate}

\bibliographystyle{siam}
\bibliography{liter}

\end{document}